\documentclass[aps,pra,reprint, superscriptaddress]{revtex4-1}
\usepackage[utf8]{inputenc}

\usepackage{epsfig,graphics,graphicx}
\usepackage{dcolumn}
\usepackage{bm}
\usepackage{color}
\usepackage{amsmath,amssymb,amsthm}
\usepackage{float}
\usepackage{centernot}
\usepackage{amsfonts}
\usepackage{mathtools}
\usepackage{fullpage}
\usepackage{units}
\usepackage[usenames,dvipsnames]{xcolor}
\usepackage{braket}
\usepackage{indentfirst}

%

\usepackage[T1]{fontenc}

%

\usepackage{mathpazo}
\usepackage{dsfont}

\usepackage{mathptmx}
\usepackage{latexsym}


\usepackage[colorlinks]{hyperref}
\hypersetup{pdfpagemode=UseNone, allcolors=blue}


\newtheorem{theorem}{Theorem}
\newtheorem{definition}[theorem]{Definition}

\newtheorem{lemma}[theorem]{Lemma}
\newtheorem{proposition}[theorem]{Proposition}

\newcounter{rem}
\setcounter{rem}{1}




\def\>{\rangle}
\def\<{\langle}

\newcommand{\borb}[2]{\left | #1 \rangle\!\langle #2 \right |}
\renewcommand{\rho}{\varrho}

\def\ii{{\rm i}}

\def\textbf#1{{\bf #1}}
\newcommand{\Cx}{\mathbb{C}}
\newcommand{\Ir}{\mathbb{Z}}
\newcommand{\Nl}{\mathbb{N}}

\newcommand{\Rl}{\mathbb{R}}

\def\beq{\begin{equation}}
\def\eeq{\end{equation}}
\def\beqa{\begin{eqnarray}}
\def\eeqa{\end{eqnarray}}
\def\eea{\end{array}}
\def\bea{\begin{array}}
\newcommand{\bei}{\begin{itemize}}
	\newcommand{\eei}{\end{itemize}}
\newcommand{\bee}{\begin{enumerate}}
	\newcommand{\eee}{\end{enumerate}}
\def\bep{\begin{proposition}}
	\def\eep{\end{proposition}}
\def\bel{\begin{lemma}}
	\def\eel{\end{lemma}}
\def\bet{\begin{theorem}}
	\def\eet{\end{theorem}}
\def\bed{\begin{definition}}
	\def\eed{\end{definition}}

\begin{document}

\title{Quantum-to-classical transition via quantum cellular automata}

\author{Pedro C.S. Costa}
\email{pedro.costa@mq.edu.au}
\affiliation{Department of Physics and Astronomy,
	Macquarie University, Sydney, NSW 2109, Australia}

\date{\today}
\begin{abstract}
A quantum cellular automaton (QCA) is an abstract model consisting of an array of finite-dimensional quantum systems that evolves in discrete time by local unitary operations. Here we propose a simple coarse-graining map, where the spatial structure of the QCA is merged into effective ones. Starting with a QCA that simulates the Dirac equation, we apply this coarse-graining map recursively until we get its effective dynamics in the semiclassical limit, which can be described by a classical cellular automaton. We show that the emergent-effective result of the former microscopic discrete model converges to the diffusion equation and to a classical transport equation under a specific initial condition. Therefore, QCA is a good model to validate the quantum-to-classical transition. 
\end{abstract}
\maketitle

\section{Introduction}

In a realistic scenario, quantum systems are never completely isolated from the environment and its quantum effects naturally start to become suppressed. Such a phenomenon is known as the decoherence process, and it is
the main ingredient used to explore the quantum-to-classical transition~\cite{schlosshauer2014quantum}. Completely positive trace-preserving (CPTP) maps are powerful tools to study such transition. These maps, which can be seen as a unitary interaction among three systems and further discard two of them~\cite{wolf2012quantum}, are therefore not reversible and their application implies a loss of information about the system. Similarly with the classical picture, where coarse-graining (CG) procedure via the projective operators~\cite{PhysicsPhysiqueFizika.2.263,ehrenfest1990conceptual} helps us to extract effective dynamics that have lower computational cost, the CPTP maps also provide a data compressor for the quantum system. Thus, these maps when applied in quantum systems can also be seen as a coarse-graining procedure.

Rather than verify the necessary conditions to extract the quantum channels at each coarse-grained level~\cite{CG_QM,duarte2020investigating}, here we concentrate on just one specific quantum dynamics, one map and one CG level. The Dirac equation is the quantum dynamics chosen, and the coarse-graining level is the one where the coherence disappears. Given that the quantum model of computation, known as quantum cellular automata (proposed in~\cite{QCAviaQws}) is able to simulate a large class of quantum walks, in particular the one that simulates the Dirac equation, we use it to study the semiclassical limit of the Dirac equation~\cite{Dirac}.  Our procedure respects the same constraints between different CG levels for the time evolution shown in \cite{CG_QM,Oleg}, which is represented in figure (\ref{fig:CG}). However, our formalism takes advantage of the space structure provided by the CA, likewise done in \cite{CG_CA,costaPCA}. That is, in order to analyze the emergent dynamics in the semi-classical limit, we assume that we cannot resolve neighboring cells and then some information has to be discarded. 

Although in this work is the indistinguishability between the states the source of decoherence to study the quantum-to-classical transition dynamics, the procedure here, in analogy to \cite{chen2019quantifying},  can also be helpful to track the nonclassicality that emerges due to other sources of decoherence in the new quantum technologies.

The paper is organized as follows: In section \ref{sec:PUQCA} we defined the quantum model of computation used to simulate the Dirac equation in the continuous limit. In section \ref{sec:CG map} we introduce the coarse-graining map used to study the decoherence process for the massless Dirac equation. In section \ref{sec:CGQCA} the CG map is applied and our main results are established. Finally, in section \ref{sec:conclusion}, we conclude with a discussion of these results and we point out possible directions to go for next.

\section{A PUQCA that simulates the  (1+1) Dirac equation}\label{sec:PUQCA}

In one spatial and one temporal dimension, the Dirac equation of a free spin-$1/2$ particle of mass $m$ may be written as follows:
\begin{equation}
  \left(\ii \frac{\hbar}{c} \partial_t + \ii\hbar  \sigma_z \partial_x - m c \sigma_x \right)\psi=0,
  \label{eq:Dirac}
\end{equation}
where $c$ is the velocity of light, $\hbar$ is Planck's constant divided by $2\pi$, and $\sigma_j$, with $j\in \{x,z\}$, are the usual Pauli matrices.
We want to construct a partitioned unitary quantum cellular automata (PUQCA) whose dynamics, in the continuous limit, is the same as the Dirac equation~\eqref{eq:Dirac}.

A PUQCA is a discrete model of quantum computation, which is most easily understood in its graph perspective. We start by discretizing the space coordinate, and thus constructing a 1D linear graph--see figure ~\ref{fig:graphRep}$a$. Each vertex in this graph represents a point in space, and we will assume that the distance between two vertices is $\Delta x$. In this way, the vertices are located at positions $x= k\, \Delta x$, with $k\in \Ir$.

\begin{figure}
	\includegraphics[trim=-10 -20 -60 -50,clip,scale=0.4]{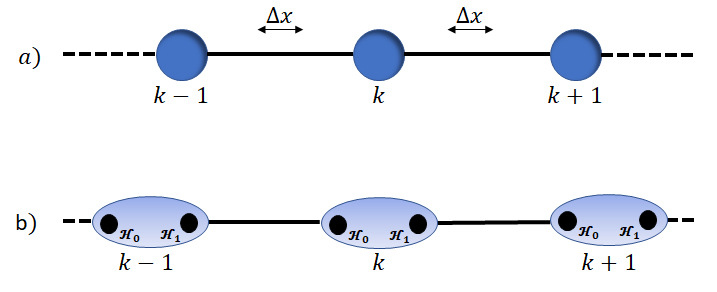}\\
	\caption{
		{\label{fig:graphRep}$a$ Graph representation of the space coordinates. Labeling the vertex positions by $k$ their positions in space can be calculated, i.e., $x=k\Delta x$. $)$ The vertices are related to the PUQCA cells, with two subcells each. }
	}
\end{figure}

In order to describe the dynamics of the particle, an excitation moving in the space, each vertex is split into two subcells and in each subcell, we place a qubit--see figure ~\ref{fig:graphRep}$b$. More concretely, to the vertex at position $x$  we associate a Hilbert space  $(\mathcal{H})_x=(\mathcal{H}_0)_x\otimes(\mathcal{H}_1)_x$ with ${\cal H}_0\simeq{\cal H}_1\simeq \Cx^2$, for the left and right subcells, respectively.  The state $\ket{0}$ will be associated with no particles, while the state $\ket{1}$ will be associated with the presence of a particle/excitation.

In this model, the evolution is also discrete. Defining the time steps size as $\Delta t$,  the values for time are $t= m\, \Delta t$, with $m\in \Nl$. For the cell structure defined, the transition function $\mathcal{E}$, i.e., the operator that updates the PUQCA state $\mathcal{E}\ket{\Psi(t)}= \ket{\Psi(t+\Delta t)}$, is composed by two parts. First, each vertex is individually and simultaneously updated by the application of a unitary $(W_0)_x:(\mathcal{H}_0)_x\otimes(\mathcal{H}_1)_x \rightarrow (\mathcal{H}_0)_x\otimes(\mathcal{H}_1)_x$. Then, to allow for the particle propagation a second unitary is applied, but now acting on the graph edges, i.e., $(W_1)_x^{x+\Delta x}:(\mathcal{H}_1)_x\otimes(\mathcal{H}_0)_{x+\Delta x} \rightarrow (\mathcal{H}_1)_x\otimes(\mathcal{H}_0)_{x+\Delta x}$. The transition function is then
\begin{equation}
\mathcal{E}= \bigotimes_{k\in\Ir} (W_1)_{k\Delta x}^{(k+1)\Delta x} \bigotimes_{k^\prime\in\Ir} (W_0)_{k^\prime \Delta x}.
\end{equation}

Following the construction of a quantum walk that simulates the 1+1 Dirac equation~\cite{PhysRevA.97.042131}, and the translation between quantum walks and the PUQCA~\cite{QCAviaQws}, we choose the unitary transformations as follows. For $W_0$ we allow, a rotation in the single-excitation subspace for each cell:
\begin{align}
\label{eq:W0}
	W_0&= |00\rangle\langle 00|- \ii \sin\theta  |01\rangle\langle 01|   + \cos\theta   |01\rangle\langle 10|\nonumber\\
	&\quad+  \cos\theta |10\rangle\langle 01|	- \ii \sin\theta  |10\rangle\langle 10| +  |11\rangle\langle 11|.
\end{align}
The restriction to the single excitation subspace guarantees that the number of particles is fixed to one. To interact the cells, we employ a simple \textsc{swap} gate between the subcells connected by an edge. That means, we take:
\begin{align}
W_1= |00\rangle\langle 00|+|01\rangle\langle 10|
+|10\rangle\langle 01|+|11\rangle\langle 11|.
\end{align}

Given the above construction, we can now see the evolution on the  one-particle sector of this QCA. To do that we write the system state at time $t$, $\ket{\Psi(t)}$ as:
\begin{small}
\begin{align}
\label{eq:onePar}
&\sum_{k\in \Ir}\Bigg[\psi^0(k\Delta x,t) |\ldots (00)_{(k-1)\Delta x} (10)_{k \Delta x} (00)_{(k+1)\Delta x}\cdots\>\nonumber\\
&\quad +\psi^1(k\Delta x,t) |\cdots (00)_{(k-1)\Delta x} (01)_{k \Delta x} (00)_{(k+1)\Delta x}\cdots\> \Bigg].
\end{align}
\end{small}
Applying the evolution $\epsilon$ to this state the following recurrence relation for the amplitudes can be derived:
{\small
\begin{align}
\label{eq:recurrence_relation}
\psi^{0}\left(x,t+\Delta t\right)&=\cos \theta\psi^{0}\left(x-\Delta x,t\right)-\ii\sin \theta\psi^1\left(x-\Delta x,t\right)\nonumber\\
\psi^{1}\left(x,t+\Delta t\right)&=\cos \theta\psi^1\left(x+\Delta x,t\right) - \ii\sin \theta\psi^{0}(x+\Delta x,t).
\end{align}}
Associating $\psi_0$ and $\psi_1$ with the spinor components of $\psi$, the last step is just to take the continuous limit. Making $\theta = mc^2\Delta t/\hbar$, and taking the limit $\Delta x\rightarrow 0$ and $\Delta t\rightarrow 0$, such that $\Delta x/\Delta t\rightarrow c$, in first order we get  the massive Dirac equation in (1+1)-dimensional spacetime~\eqref{eq:Dirac}. We call such quantum automaton as the Dirac PUQCA.

\section{Coarse-graining a 1D PUQCA}\label{sec:CG map}

Now that we have established the quantum cellular automaton that simulates the 1+1 Dirac equation, we use the automaton spatial structure to define a coarse-graining map. The physical idea is that an observer with low resolution will not be able to discriminate between different automaton cells. 

To incorporate this idea in our approach, we employ completely positive and trace preserving (CPTP) maps, a.k.a. quantum channels, to throw away inaccessible information in a general way. This approach has been developed and used in other  scenarios~\cite{pedrinho,Oleg,mermin1980,isadora2020,CG_QM,cris2019,alicki2009,poulin2005,caslavLG,Wang2013,Jeong2014,silva2020macro}. In general, a coarse-graining channel will be defined as a quantum channel which is a trace preserving and completely positive map that reduces the description of the system: $\Lambda:\mathcal{L}(\mathcal{H}_D)\rightarrow\mathcal{L}(\mathcal{H}_d)$ with $D> d$, where  $\mathcal{L}(\mathcal{H})$ is the set of all linear operators acting on $\mathcal{H}$.

Here, we will apply a coarse-graining channel in each cell of the automaton, reducing their two subcells into a single subcell. Locally the coarse-graining map acts as follows:
\begin{equation}
\label{eq:localCG}
\Lambda_{k\Delta x}:\mathcal{L}\left(\left(\mathbb{C}_{0}^{2}\right)_{k\Delta x} \otimes \left(\mathbb{C}_{1}^{2}\right)_{k\Delta x}\right) \rightarrow \mathcal{L}\left(\left(\mathbb{C}_{ k\hspace*{-0.2cm}\mod 2 }^{2}\right)_{\lfloor{\frac{k}{2}}\rfloor\Delta x}\right).
\end{equation}
The application of the coarse-graining map in the whole automaton is then given by:

\begin{equation}
\label{eq:CG-map}
\Lambda_{CG}=\bigotimes_{k\in \Ir} \Lambda_{k\Delta x}.
\end{equation}
Notice that the application of the coarse-graining map preserves the automaton local structure. It is also important to realize that after ``compressing'' the two subcells at a position $x$ into an effective subcell, this subcell is merged with a neighboring effective subcell to form an effective cell. The distance between two effective cells is still $\Delta x$.  By this procedure, we establish a new level of description of the automaton, with a new 1-d lattice. 

To finalize our procedure to obtain an effective description, we must define the action of $\Lambda_x$.  To model the lack of resolution, we will employ a coarse-graining map that cannot distinguish if an excitation is in the left or right subcell of a point $x$. The action of $\Lambda_x$, which was first suggested in~\cite{CG_QM} in the context of atoms in a optical lattice, is as follows:

\begin{equation}
\begin{tabular}{lc|cl}
$\Lambda_{x}\left(\borb{01}{01}\right)=\borb{1}{1}$&&&$\Lambda_{x}\left(\borb{01}{00}\right)=\borb{1}{0}/\sqrt{3}$\\
$\Lambda_{x}\left(\borb{10}{10}\right)=\borb{1}{1}$&&&$\Lambda_{x}\left(\borb{10}{00}\right)=\borb{1}{0}/\sqrt{3}$\\
$\Lambda_{x}\left(\borb{00}{00}\right)=\borb{0}{0}$&&&$\Lambda_{x}\left(\borb{10}{01}\right)=0$
\end{tabular}
\label{eq:rules}
\end{equation}

The first column contains the diagonal elements and expresses the inability to distinguish whether the excitation is on the left or the right subcell. If no excitation is present, then effectively no excitation is observed. It is clearly observed that this map is not equivalent to a partial trace. 
The second column contains the off-diagonal terms. As the the base vectors in the excited subspace cannot be discriminated, then there can be no coherence among them. The only possible coherence terms are in between the no excitation subspace and the single-excitation subspace. The factor $1/\sqrt{3}$ guarantees the complete positivity of $\Lambda_x$.
As quantum channels preserve the Hermiticity, the action on the Hermitian conjugated terms can be apprehended from~\eqref{eq:rules}. Further discussion on this map and its action on two-particle subspace can be found in~\cite{CG_QM,pedrinho}. 

Given the map $\Lambda_x$, we are now ready to find the coarse-grained 1D PUQCA. As the coarse-graining channel maps the 1-d lattice to another similar 1-d lattice, we can apply this map $L$ times on the underlying automation to obtain  its  $L$th level of description. In the one-particle sector, we can  write the automaton state at level $L$ of description as 
\begin{equation}
\label{eq:mainOp}
    \rho_L = \sum_{k, k^\prime \in \Ir}\sum_{a,a^\prime=0}^1 (\rho_L)^{k \Delta x,a}_{k^\prime \Delta x, a'}\ket{(k,a)}\bra{(k^\prime, a^\prime)},
\end{equation}
where $\ket{(k,a)}$ represents the state with an excitation at position $k \Delta x$ and subcell $a$. The state at level $L+1$, i.e., after we apply the coarse-graining map $\Lambda_{CG}$ in the whole automaton, is given by:

\begin{align}
&\rho_{L+1}=\sum_{k,k^\prime\in \Ir}\sum_{a,a=0}^{1}\left(\delta_{k,k'}\delta_{a,a'}+\frac{\left(1-\delta_{k,k'}\right)}{3}\right)  \nonumber\\ 
&\quad\times(\rho_L)^{k \Delta x,a}_{k^\prime \Delta x, a'}\left|\left(\left\lfloor{\frac{k}{2}}\right\rfloor, k\;\text{mod}\; 2\right)\right\rangle \left\langle\left(\left\lfloor{\frac{k'}{2}}\right\rfloor, k'\;\text{mod}\; 2\right)\right |
\label{eq:cg_L+1}.
\end{align}
The derivation of this expression is done in Appendix~\ref{app:effective}.

From the above expression, we can derive, by recursion, the density matrix coefficients for level $L$ in terms of the density matrix coefficients of level 0:
\begin{equation}
\label{eq:cg0toLpop}
\left(\rho_{L}\right)_{x,b}^{x,b}=\sum_{s=0}^{2^{L-1}-1}\sum_{a=0}^{1}\left(\rho_{0}\right)_{2^L x+2^{L-1} b \Delta x+s\Delta x,a}^{2^L x+2^{L-1} b \Delta x+s\Delta x,a},
\end{equation}
and 

\begin{equation}
\label{eq:cg0toLcoe}
\left(\rho_{L}\right)_{x',b'}^{x,b}=\frac{1}{3^L}\sum_{s,s'=0}^{2^{L-1}-1}\sum_{a,a'=0}^{1}\left(\rho_{0}\right)_{2^L x'+2^{L-1}b'\Delta x+s'\Delta x,a'}^{2^L x+2^{L-1}b\Delta x+s\Delta x,a},
\end{equation}
for $x \neq x'$ and $b,\; b'\in\{0,1\}$. From the recursive relation above for the off-diagonal elements of $\rho_L$, we show in Appendix~\ref{APP:decoh} that $\left|\left(\rho_{L}\right)_{x',b'}^{x,b}\right| \leq (2\sqrt{2}/3)^L$, i.e.  the coherence elements converge exponentially fast to zero with the coarse-graining level. To illustrate this exponential decay, consider the case where at $L=0$ we start the automaton in the state $\ket{\Psi(0)} = (\ket{(0,0)}+\ket{(0,1)})/\sqrt{2}$ and we evolve it with the Dirac PUQCA, setting $\theta=\pi/4$ in equation (\ref{eq:W0}), for 200 time steps. Figure~\ref{fig:CGRd} shows how the sum of all absolute values of the off-diagonal terms decays with level of coarse-graining.

\begin{figure}
	\includegraphics[trim=120 0 130 0,clip,scale=0.125]{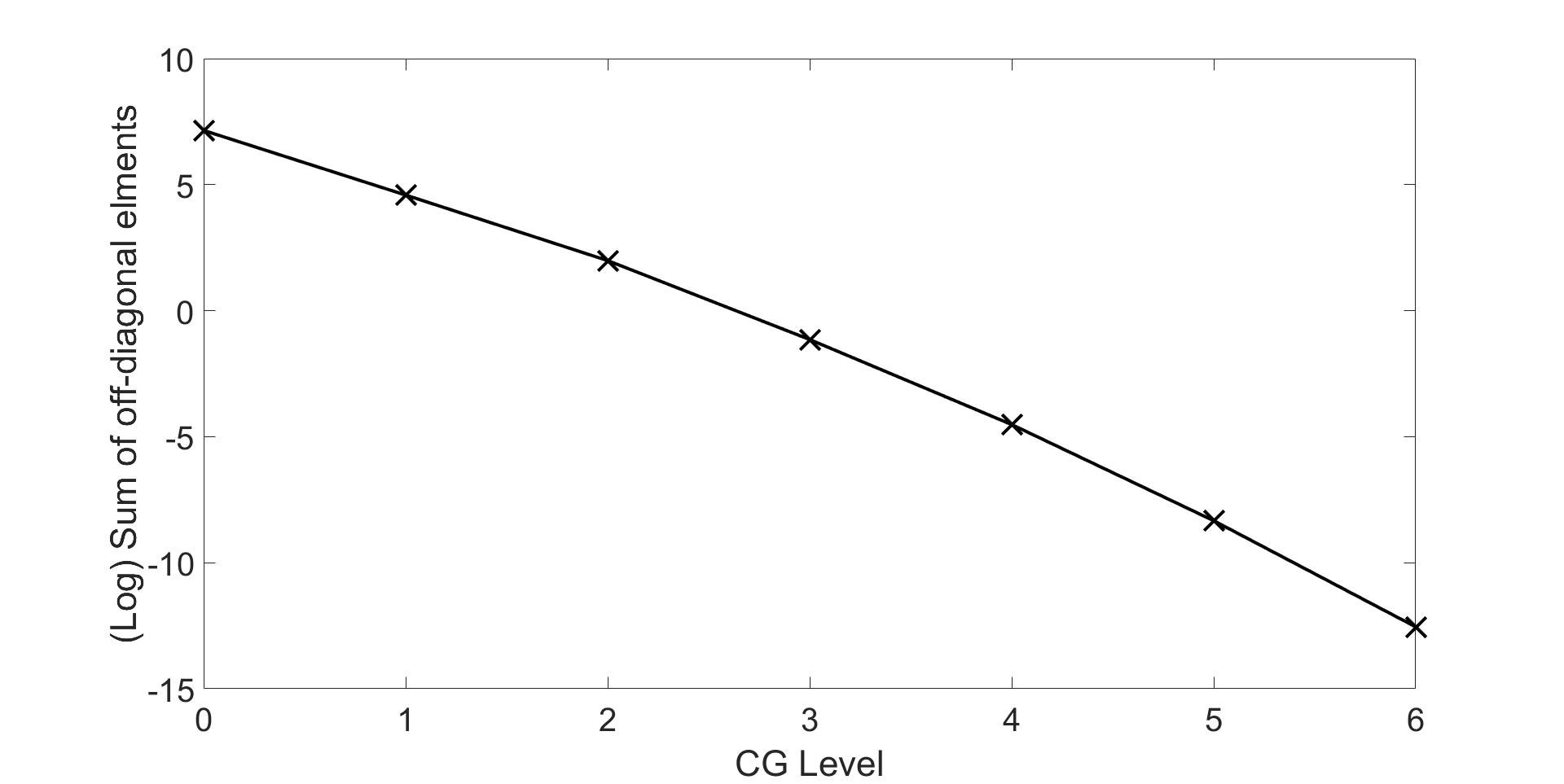}\\
	\caption{
		\label{fig:CGRd}  This plot shows the semilog sum of the off-diagonal elements in different levels of CG for a fixed time. Starting with a QCA state at $t=0$ in $L=0$, the sum of the absolute values of the coherence elements is taken from $L=0$ to $L=6$. An exponential decay behavior can be noticed.}
\end{figure}

\begin{figure}
	\includegraphics[trim=-5 10 10 0,clip,scale=0.5]{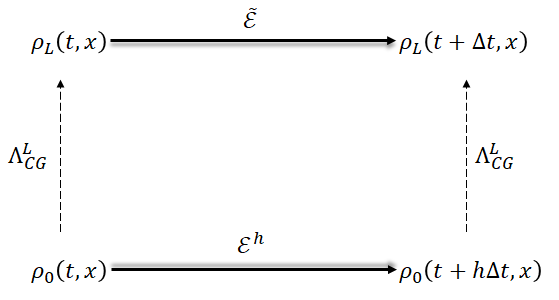}\\
	\caption{\label{fig:CG} The coarse-graining procedure for an arbitrary level of $L$ and time step $h\Delta t$. While at the level $L=0$ the time evolution operator, $\mathcal{E}$, is given by a unitary map, at the level $L$ the time evolution is given by a quantum channel $\tilde{\mathcal{E}}$ derived from $\mathcal{E}$.}  
\end{figure}

\section{Coarse-grained Dirac PUQCA}
\label{sec:CGQCA}

In the previous section, we computed the coarse-grained density operator for an arbitrary level $L$ and we wrote the recursive equations over different $L$. In this section, we will compute the coarse-grained QCA. Before the coarse-graining map, the QCA is governed by equation (\ref{eq:onePar}) and it is well known that such evolution models a free evolution of Dirac particles, as we recalled in the first section. Now, what can we say about the dynamics at a generic level $L$? We certainly know that the QCA will lose coherence at each application of the coarse-graining map. For sake of simplicity, we start restricting ourselves to $\theta = 0$, the \textit{massless case}. This means, from equation (\ref{eq:recurrence_relation}), that $\psi^0$ and $\psi^1$ are fully decoupled: left(right)-moving modes at level $L=0$ maps to left(right)-moving modes at level $L\geq 1$. Thus, let us focus on the population terms. Before we coarse-grain the density operator at time $t$, we add a new  information about the initial condition. Here we assume that we start with a superposed state in the center of the lattice of size $2^L+1$, that is for $l=0,1,\ldots, L-1$
\begin{align}
\left(\rho_0(0)\right)^{2^{L-1-l},0}_{2^{L-1-l},0}&=\alpha^{(0)}_l,\\
\left(\rho_0(0)\right)^{2^{L-1+l},1}_{2^{L-1+l},1}&=\alpha^{(1)}_l,\nonumber
\end{align}
where $\alpha^{(1)}_l, \alpha^{(0)}_l \in \mathbb{R}$ and $\sum_{l=0}^{L-1}(\alpha^{(0)}_l+\alpha^{(1)}_l)=1$. Now, for a given time $t\ll2^L$ from the massless dynamics, equation (\ref{eq:recurrence_relation}), since any superposition is created  anymore when $\theta=0$, all amplitudes that started on the left(right) of the center remain in the subcells 0(1) that is

\begin{eqnarray}
\label{eq:level_L_at_T}
\left(\rho_L(t)\right)^{x,0}_{x,0}&=&\sum_{s=0}^{2^{L-1}-1}\left(\rho_0(t)\right)^{2^Lx+s\Delta x,0}_{2^Lx+s\Delta x,0}\\
\left(\rho_L(t)\right)^{x,1}_{x,1}&=&\sum_{s=2^{L-1}}^{2^{L}-1}\left(\rho_0(t)\right)^{2^Lx+s\Delta x,1}_{2^Lx+s\Delta x,1}.
\end{eqnarray}
Now, let us coarse-graining the density operator $\rho$ at time $t+\Delta t$ and then use equation ($\ref{eq:recurrence_relation}$) for $\theta=0$:
\begin{eqnarray}
\label{eq:pho0}
\rho^{(0)}_L(t+\Delta t,x)&=&\sum_{s=0}^{2^{L-1}-1}\left(\rho_0(t+\Delta t)\right)^{2^Lx+s\Delta x,0}_{2^Lx+s\Delta x,0}\\
&=&\sum_{s=0}^{2^{L-1}-1}\left(\rho_0(t)\right)^{2^Lx-\Delta x+s\Delta x,0}_{2^Lx-\Delta x+s\Delta x,0},\nonumber\\
\rho^{(1)}_L(t+\Delta t,x)&=&\sum_{s=0}^{2^{L-1}-1}\left(\rho_0(t+\Delta t)\right)^{2^Lx+s\Delta x,1}_{2^Lx+s\Delta x,1}\label{eq:pho1}\\
&=&\sum_{s=2^{L-1}}^{2^{L}-1}\left(\rho_0(t)\right)^{2^Lx+\Delta x+s\Delta x,1}_{2^Lx+\Delta x+s\Delta x,1},\nonumber
\end{eqnarray}
where we adopted the notation $\rho^{(i)}_L(t+\Delta t,x)=\left(\rho_L(t)\right)^{x,i}_{x,i}$ for $i=0,1$ to describe the the density matrix elements in the level $L$. The challenge here is to see whether in the right-hand side of equations (\ref{eq:pho0}) and (\ref{eq:pho1}), there is a density matrix in the level $L$ at time $t$ which is derived from the a density matrix of level 0 at time $t$. It turns out that the diagonal elements of the density matrix of level 0 after we coarse-grained $L$ times are $\rho^{(i)}_L(t,x-(-1^i)\Delta x/2^L)$, and therefore,
\begin{eqnarray}
\label{eq:recurrence_updated}
\rho^{(0)}_L(t+\Delta t,x)&=&\rho^{(0)}_L(t,x-\Delta x/2^L)\\
\rho^{(1)}_L(t+\Delta t,x)&=&\rho^{(1)}_L(t,x+\Delta x/2^L)\nonumber.
\end{eqnarray}

Now, Taylor expands the above equation around $x$ and $t$ up to a first order, and the following continuous differential equations can be established 
\begin{eqnarray}
\label{eq:recurrence_updated_Dif}
\partial_t\rho^{(0)}_L(t,x)&=& -\frac{c}{2^L}\partial_{x} \rho^{(0)}_L(t,x) \\
\partial_t\rho^{(1)}_L(t,x)&=& \frac{c}{2^L}\partial_{x} \rho^{(1)}_L(t,x) \nonumber.
\end{eqnarray}
where $c=\Delta x/\Delta t$. From the results above, it is clear that the factor $1/2^L$ starts to create a stationary state as we increase the level $L$, see figures (\ref{fig:CGspee}).  In these plots, we can see the contrast between the light cones after the CG map generated from a massless particle which started in a superposition state in $x=0$ at $t=0$. In other words, by doing only CG in space, trivial dynamics start to appear fast. Inspired by the results given in \cite{costaPCA} more interesting cases can be derived when the coarse-graining in time is also done. In this procedure, the transition function, $\mathcal{E}$, can be applied  $h$ times before each CG application. The number of $\mathcal{E}$ applications satisfies the following upper bound $h \leq s$, where $s$ is the amount of grouped subcells where the local maps $\Lambda_x$ are applied. In our case, from equation (\ref{eq:localCG}), $s=2$. 

\begin{figure}
	\includegraphics[trim=350 0 80 0,clip,scale=0.18]{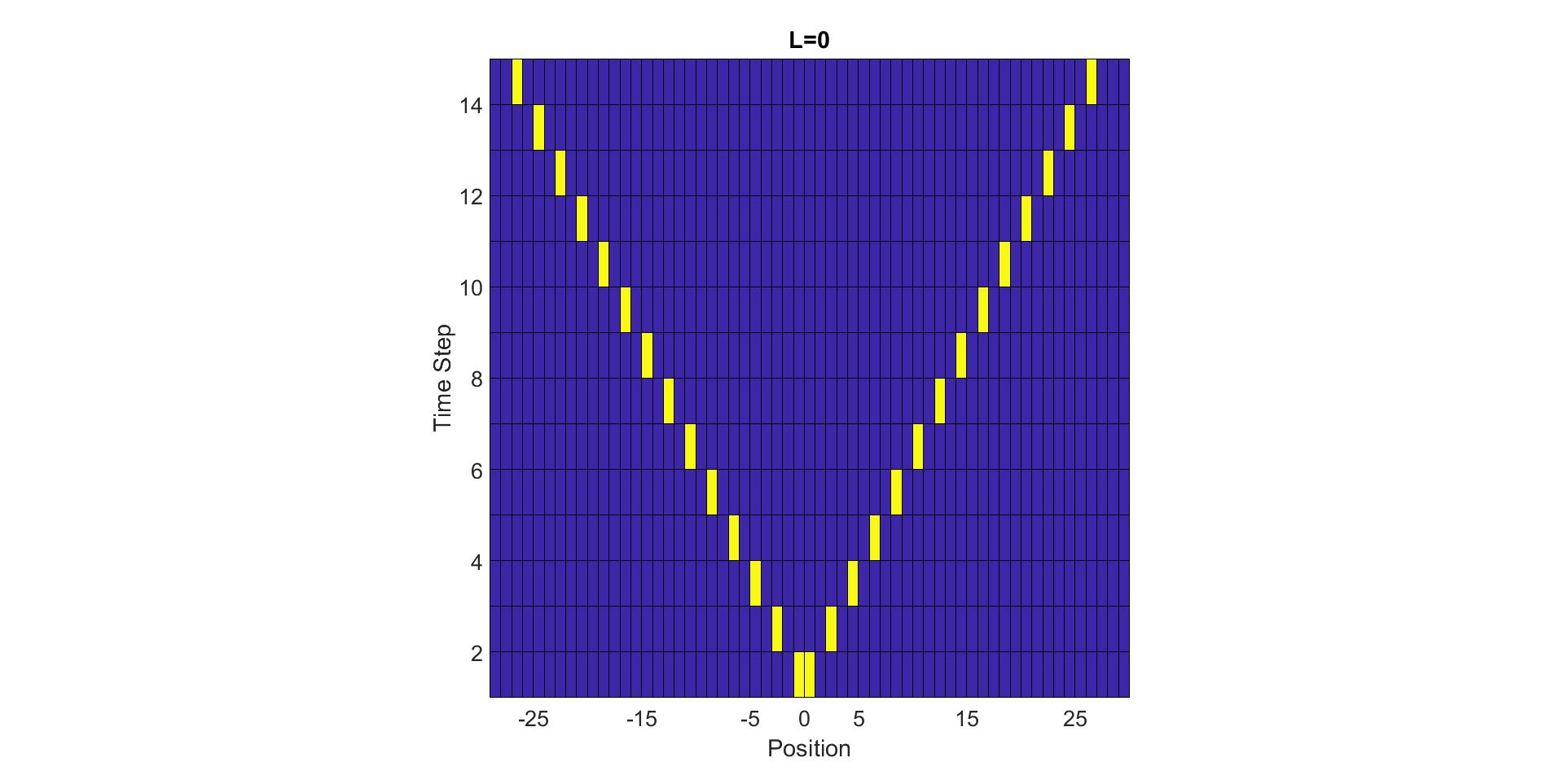}\\ \quad(a)
	\includegraphics[trim=350 0 80 0,clip,scale=0.18]{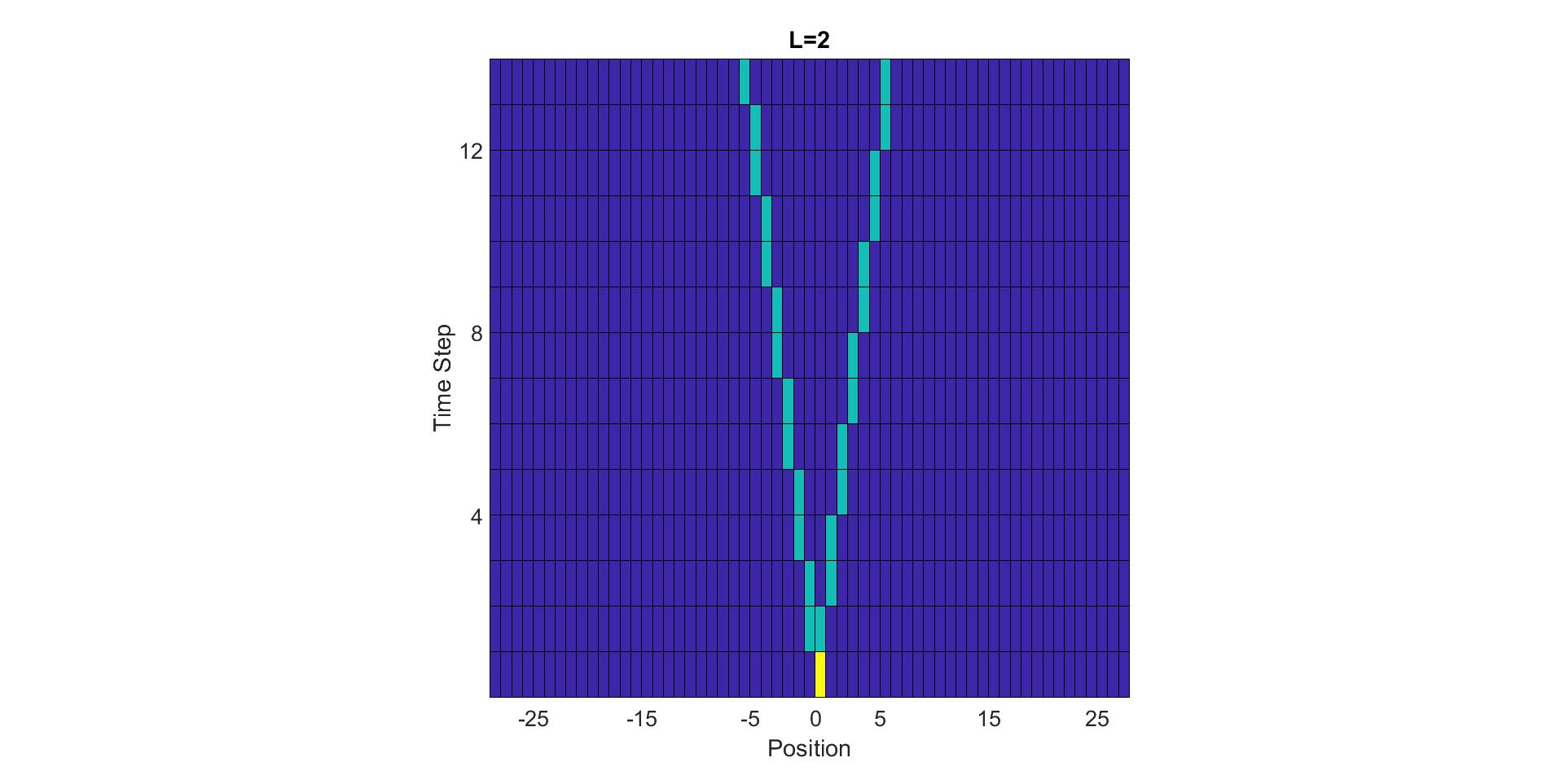}\\ \quad(b)
	\caption{
		{\label{fig:CGspee}  Light cone propagation before (a) and after (b) two space coarse-grained transformation. The particle propagation becomes slower at each CG transformation, where the speed of the particle is rescaled by the factor $1/2^L$. }
	}
\end{figure}

In figure (\ref{fig:CG}) we elucidated the procedure with an arbitrary value for $h$. Notice that in the new level the update state is achieved by only one application of $\tilde{\mathcal{E}}$, i.e.,  the quantum channel emerged from the CG process. However, in the level $L$ we rescaled the time from $t+h\Delta t$ to $t+\Delta t$ when $h\neq 1$.  As explained in~\cite{costaPCA}, we do not accept $h>s$ since we want to preserve the local structure in the emergent dynamics which comes from the light cone existent in the QCA evolution.

Let us go back to our previous problem, given by equation (\ref{eq:recurrence_updated_Dif}), where the localization starts to appear. Combining the temporal with the spatial CG, this no physical situation generated by the particle localization can be avoided. In fact, we can keep with the same light cone structure if before each CG operation we compensate it by applying the transition function twice. Thus, since we are moving from the level 0 straight to the level $L$, which is equivalent of we group $2^L$ cells, the velocity can be kept constant if the compensation by $h=2^L$ is done before $\Lambda_{CG}^{L}$. In other words, we have $t+2^L\Delta t$ in the level 0 equivalent to  $t+\Delta t$ in the level $L$, 
\begin{eqnarray*}
\left(\rho_L(t+\Delta t)\right)^{x,0}_{x,0}&=&\sum_{s=0}^{2^{L-1}-1}\left(\rho_0(t+2^L\Delta t)\right)^{2^Lx+s\Delta x,0}_{2^Lx+s\Delta x,0}\\
\left(\rho_L(t+\Delta t)\right)^{x,1}_{x,1}&=&\sum_{s=2^{L-1}}^{2^{L}-1}\left(\rho_0(t+2^L\Delta t)\right)^{2^Lx+s\Delta x,1}_{2^Lx+s\Delta x,1}.
\end{eqnarray*}
Therefore, $x \pm \Delta x/2^L$, goes to $x \pm \Delta x$, when we have  $t+2^L\Delta t$ in the level 0 instead $t+\Delta t$, and consequently the factor $1/2^L$ in equation (\ref{eq:recurrence_updated_Dif}) is removed, that is
\begin{eqnarray}
\label{eq:recurrence_updated_Dif2}
\partial_t\rho^{(0)}_L(t,x)&=& -c\partial_{x} \rho^{(0)}_L(t,x)\\
\partial_t\rho^{(1)}_L(t,x)&=& c\partial_{x} \rho^{(1)}_L(t,x)\nonumber.
\end{eqnarray}

Before we move to the case where $m\neq 0$, here we can already point out that the emergent dynamics derived in the classical limit can be easily described by the partitioned cellular automata~\cite{costaPCA,Wol00,Margolus} with two subcells, which is equivalent to two bits at each cell in the classical case. The transition function is composed of two permutation operators. Both permutations are the SWAP operator, analogous to the PUQCA used to describe the massless case, that is, the first one acting locally inside all vertices, moving the excitation from the subcell 0 (1) to the he subcell 1 (0), and the second interacting the vertices, sending the excitation from the vertex $x$, in the subcell 1 (0) to $x+1$ $(x-1)$ in the subcell 0 (1).

The starting point for the massive case is the recurrence relation for the probabilities. From equation (\ref{eq:recurrence_relation}) when $\sin\theta \approx \theta$, in terms of the density matrix elements, we have
\begin{align}
\label{eq:recurrence_relatioDens}
\rho(t+\Delta t)_{x,0}^{x,0} &\approx \rho(t)_{x-\Delta x,0}^{x-\Delta x,0}\\
&\quad+i\frac{mc^2\Delta t}{\hbar}\left( \rho(t)_{x-\Delta x,1}^{x-\Delta x,0}-\rho(t)_{x-\Delta x,0}^{x-\Delta x,1}\right)\nonumber\\
\rho(t+\Delta t)_{x,1}^{x,1} &\approx \rho(t)_{x+\Delta x,1}^{x+\Delta x,1}\nonumber\\
&\quad+i\frac{mc^2\Delta t}{\hbar}\left( \rho(t)_{x+\Delta x,0}^{x+\Delta x,1}-\rho(t)_{x+\Delta x,1}^{x+\Delta x,0}\right),\nonumber
\end{align}
Notice that in the pair of equations above the second-order terms in time, i.e., $\Delta t^2$, are not included. Now, we will not assume anything about the initial condition, and thus, in the level $L$ we have
\begin{align*}
\left(\rho_L(t)\right)^{x,0}_{x,0}&=\sum_{s=0}^{2^{L-1}-1}\left[\left(\rho_0(t)\right)^{2^Lx+s\Delta x,0}_{2^Lx+s\Delta x,0}+\left(\rho_0(t)\right)^{2^Lx+s\Delta x,1}_{2^Lx+s\Delta x,1}\right]\\
&=\sum_{s=0}^{2^{L-1}-1}\left(\rho_0(t)\right)^{2^Lx+s\Delta x}_{2^Lx+s\Delta x},\\
\left(\rho_L(t)\right)^{x,1}_{x,1}&=\sum_{s=2^{L-1}}^{2^{L}-1}\left[\left(\rho_0(t)\right)^{2^Lx+s\Delta x,0}_{2^Lx+s\Delta x,0}+\left(\rho_0(t)\right)^{2^Lx+s\Delta x,1}_{2^Lx+s\Delta x,1}\right],\\
&=\sum_{s=2^{L-1}}^{2^{L}-1}\left(\rho_0(t)\right)^{2^Lx+s\Delta x}_{2^Lx+s\Delta x},\\
\end{align*}
wherein the dependence of the internal state of the particle was eliminated in the right part of the equality above. This elimination can be done since the probability of finding a particle in the cell $x$ is given by $\rho(t)^{x,0}_{x,0}+\rho(t)^{x,1}_{x,1}$. If we also add both equations above, we get
\begin{equation}
\label{eq:Mass}
    \left(\rho_L(t)\right)^{x}_{x}=\sum_{s=0}^{2^{L}-1}\left(\rho_0(t)\right)^{2^Lx+s\Delta x}_{2^Lx+s\Delta x}.
\end{equation}
Likewise the massless case, our next step is to look up our density matrix at time $t+\Delta t$ and then making use of equations (\ref{eq:recurrence_relatioDens}). Thus, the update state at level $L$ writing in terms of the level 0 is 
\begin{align}
\label{eq:p0}
&\left(\rho_L(t+\Delta t)\right)^{x,0}_{x,0}\nonumber\\ &\approx \sum_{s=0}^{2^{L-1}-1}\left[\left(\rho_0(t)\right)^{2^Lx-\Delta x+s\Delta x,0}_{2^Lx-\Delta x+s\Delta x,0}+\left(\rho_0(t)\right)^{2^Lx+\Delta x+s\Delta x,1}_{2^Lx+\Delta x+s\Delta x,1}\right.\nonumber\\
&\quad+i\frac{m c^2\Delta t}{\hbar}\left(\left(\rho_0(t)\right)^{2^Lx-\Delta x+s\Delta x,0}_{2^Lx-\Delta x+s\Delta x,1}-\left(\rho_0(t)\right)^{2^Lx-\Delta x+s\Delta x,1}_{2^Lx-\Delta x+s\Delta x,0}\right.\nonumber\\
&\quad+\left.\left.\left(\rho_0(t)\right)^{2^Lx+\Delta x+s\Delta x,1}_{2^Lx+\Delta x+s\Delta x,0}-\left(\rho_0(t)\right)^{2^Lx+\Delta x+s\Delta x,0}_{2^Lx+\Delta x+s\Delta x,1}\right)\right],
\end{align}
\begin{align}
&\left(\rho_L(t+\Delta t)\right)^{x,1}_{x,1}\nonumber\\ &\approx \sum_{s=2^{L-1}}^{2^{L}-1}\left[\left(\rho_0(t)\right)^{2^Lx-\Delta x+s\Delta x,0}_{2^Lx-\Delta x+s\Delta x,0}+\left(\rho_0(t)\right)^{2^Lx+\Delta x+s\Delta x,1}_{2^Lx+\Delta x+s\Delta x,1}\right. \nonumber\label{eq:p1}\\
&\quad+i\frac{m c^2\Delta t}{\hbar}\left(\left(\rho_0(t)\right)^{2^Lx-\Delta x+s\Delta x,0}_{2^Lx-\Delta x+s\Delta x,1}-\left(\rho_0(t)\right)^{2^Lx-\Delta x+s\Delta x,1}_{2^Lx-\Delta x+s\Delta x,0}\right.\nonumber\\
&\quad+\left.\left.\left(\rho_0(t)\right)^{2^Lx+\Delta x+s\Delta x,1}_{2^Lx+\Delta x+s\Delta x,0}-\left(\rho_0(t)\right)^{2^Lx+\Delta x+s\Delta x,0}_{2^Lx+\Delta x+s\Delta x,1}\right)\right].
\end{align}
In this stage, we can already do a Taylor expansion in both sides. On the left side, we expand until the first order in time. However, on the right side there is a subtlety that should be noticed. First, the expansion is done around the point $2^Lx+p$, where $p=\pm s\Delta x$. Second, while for the terms without the mass the expansion is done until the first order, that is
\begin{align}
&\left(\rho_0(t)\right)^{2^Lx-\Delta x+s\Delta x,0}_{2^Lx-\Delta x+s\Delta x,0}+\left(\rho_0(t)\right)^{2^Lx+\Delta x+s\Delta x,1}_{2^Lx+\Delta x+s\Delta x,1}\\ 
&\quad\approx \left(\rho_0(t)\right)^{2^Lx+s\Delta x}_{2^Lx+s\Delta x}-\frac{\Delta x}{2^L}\partial_x\left(\rho_0(t)\right)^{2^Lx+s\Delta x,0}_{2^Lx+s\Delta x,0}\nonumber\\
&\quad+\frac{\Delta x}{2^L}\partial_x\left(\rho_0(t)\right)^{2^Lx+s\Delta x,1}_{2^Lx+s\Delta x,1},\nonumber
\end{align}
the ones where the mass appears we only consider the zero order in the expansion. This difference in the order considered during the Taylor expansion appears because the ones which includes the mass are already first order terms in $\Delta t$. However, it turns out that the result of adding all zero orders terms eliminates the mass in the emergent dynamics, since
\begin{equation}
\left(\rho_0(t)\right)^{2^Lx-\Delta x+s\Delta x,0}_{2^Lx-\Delta x+s\Delta x,1}-\left(\rho_0(t)\right)^{2^Lx+\Delta x+s\Delta x,0}_{2^Lx+\Delta x+s\Delta x,1} \approx O\left(\frac{\Delta x}{2^L}\right).
\end{equation}
Thus, the following dynamics is derived for the massive case
\begin{align}
\label{eq:masscase}
   &\partial_t\left(\rho_L(t)\right)^{x}_{x}\nonumber\\
   &=\sum_{s=0}^{2^{L}-1}\frac{c}{2^L}\partial_x\left(\left(\rho_0(t)\right)^{2^Lx+s\Delta x,1}_{2^Lx+s\Delta x,1}-\left(\rho_0(t)\right)^{2^Lx+s\Delta x,0}_{2^Lx+s\Delta x,0}\right).
\end{align}
Therefore, all the interference terms that would appear as a consequence of the mass are not detectable in the level $L$. Moreover, since in equation (\ref{eq:masscase}) the dynamics of the level $L$ is level 0 dependent, there is not a model in the level $L$ which is derived from the level 0 when the mass of the particle is included. 
 
However, in the level $L$ the internal degree of the particle  is not perceived anymore, and then, since the states 0 and 1 belong to the same cell but located in different points, as shown in figure (\ref{fig:graphRep}b), we can say that the displacement between them is $\Delta x$ as well. Therefore, by setting 
\begin{equation}
\label{eq:equiv}
  \left(\rho_0(t)\right)^{x,1}_{x,1} = \left(\rho_0(t)\right)^{x}_{x},
\end{equation}
the following equality holds
\begin{equation}
\label{eq:equiv2}
  \left(\rho_0(t)\right)^{x,0}_{x,0} = \left(\rho_0(t)\right)^{x-\Delta x}_{x-\Delta x}.
\end{equation}
Using this equivalence into equation (\ref{eq:masscase}), and considering that we have $\Delta x^2/\Delta t=\lambda$, we get
\begin{align}
\label{eq:dif}
   \partial_t\rho_L(t,x) &= \sum_{s=0}^{2^{L}-1} \frac{\Delta x^2}{2^L\Delta t} \partial_x^2\left(\left(\rho_0(t)\right)^{2^Lx+s\Delta x}_{2^Lx+s\Delta x}\right)\\
   &= \frac{\Delta x^2}{2^L\Delta t}\partial_x^2\left(\sum_{s=0}^{2^{L}-1}\left(\rho_0(t)\right)^{2^Lx+s\Delta x}_{2^Lx+s\Delta x}\right)\nonumber\\
   &=\frac{\lambda}{2^L}\partial_x^2\rho_L(t,x).\nonumber
\end{align}
which is the diffusion equation. 

In figure (\ref{fig:Var}) we contrast the variance between the level 0, that simulates the Dirac equation, and the emerged one. In the level 0, we can see the character linear of the variance, as expected from the quantum walks dynamics. However, in the level 3 we can notice a nonlinear behavior, similar to the dispersion relation provided by the random walk. Such result shown in  figure (\ref{fig:Var}) supports the continuous limit derived in equation (\ref{eq:dif}).  

\begin{figure}
	\includegraphics[trim= 160 0 10 40,clip,scale=0.14]{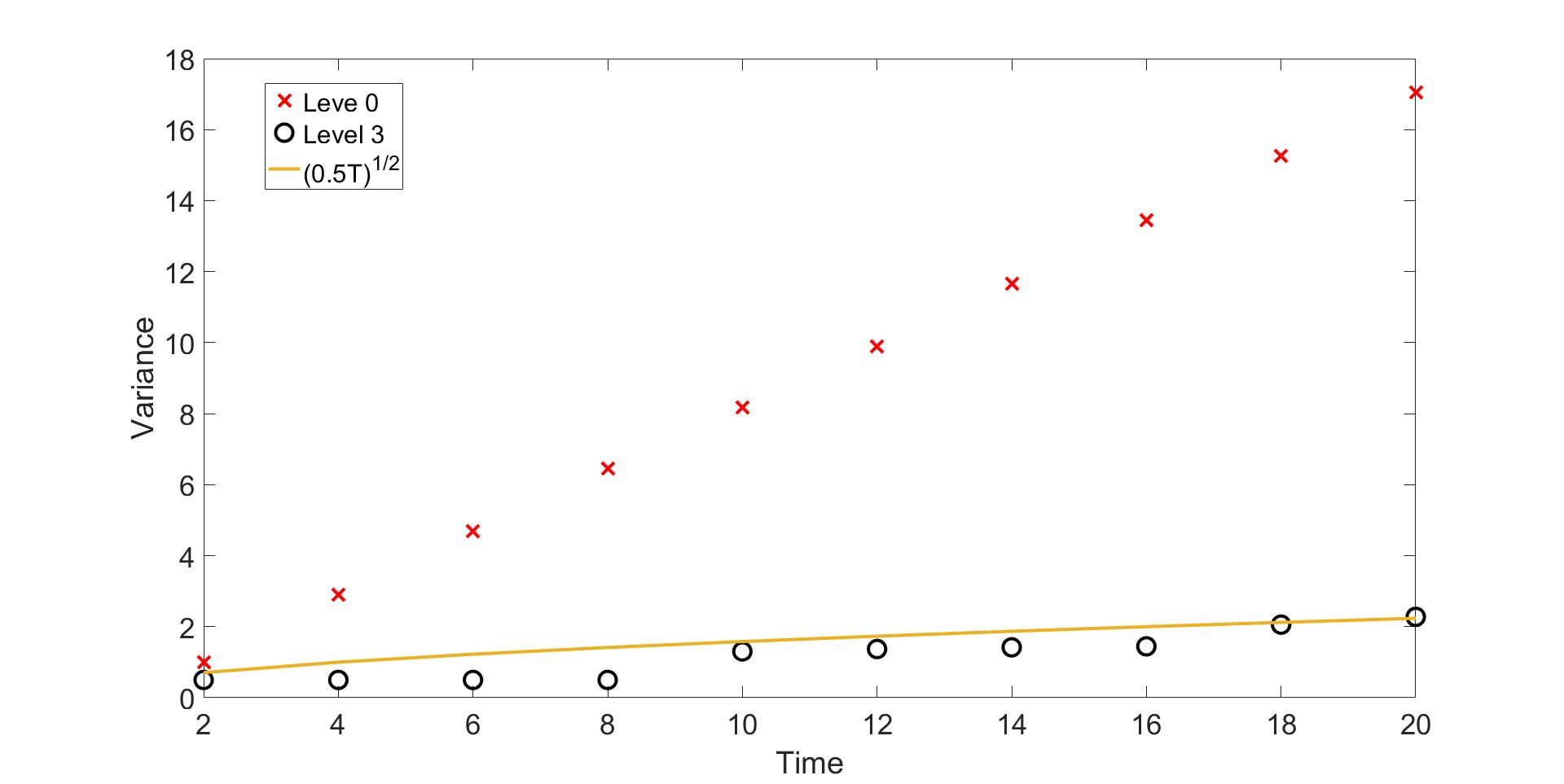}\\
	\caption{\label{fig:Var} Variance against time from the QCA at level 0 and the emergent dynamics derived at level 3. In both cases, the same value $\theta=0.2$ is used.   }  
\end{figure}

\section{Conclusion}\label{sec:conclusion}
In this work, we studied emergent dynamics starting from the Dirac like Hamiltonian, i.e., $H_D=\sigma_x \otimes p_x +m\sigma_z \otimes I$, where the dynamics is driven by $\partial_t\rho=\left[H,\rho\right]$. We used the partitioned unitary quantum cellular automata \cite{QCAviaQws} as the computational model to derive $\exp(i H_D)$. Inspired by \cite{costaPCA}, we provided an analogous coarse-graining scheme to explore emergent dynamics, but now for its quantum version, namely the PUQCA. 

We presented a numerical instance and analytical evidence that the coarse-graining map employed here generates decoherence in our quantum system. As a consequence of multiple CG map applications, a semiclassical system can be achieved. Furthermore, we explicitly show that under a particular initial condition for the massless case, the classical transport given by $\partial_t \rho_L +v\sigma_z \partial_x \rho_L=0$ is the emergent dynamics by computing the continuous limit. In this emergent system, the velocity $v$ can either go exponentially faster to zero, i.e., $v\sim 1/2^L$, or remain the same if the temporal CG is also included. On the other hand, when the mass is included and no assumptions about the initial conditions are done, the diffusion equation is recovered.   It is well known that the diffusion equation is achieved by computing the continuous limit of the random walk. Therefore, the CG procedure developed here provided the right tools to see how the random walk can be derived from the quantum walk. 

Differently from \cite{costaPCA} that shows a general procedure to study emergent dynamics, here just one map and one type unitary operators are used. However, our results are pointing out that this procedure can be generalized for different \textit{complete positive trace-preserving} maps as well as for different dynamics, including the cases where we have interacting quantum walks~\cite{twoWalk1,twoWalk2,twoWalk3}, and we intend to address these possibilities in future publications. Another question that immediately raises from our procedure is whether we can derive emergent local quantum channels $\tilde{\mathcal{E}}$ arising from QCA, following the procedure shown in~\cite{CG_QM,duarte2020investigating}. If such quantum channels can be achieved, we will provide a powerful tool to generate transition functions such that noisy quantum cellular automata, as described in~\cite{NoiseQCA}, naturally appear during the decoherence process. 

Therefore, this formalism might be an excellent candidate to study open quantum systems and also to shed some light on more fundamental questions in the quantum theory, e.g., a tool to explore quantum-to-classical transition via quantum cellular automata. Finally, another possible direction to go is contrasting the emergent dynamics derived here against alternative formulation to study different coarse levels like the use of the wavelets basis~\cite{wavelet}.

\section*{Acknowledgements}

I am really grateful to Fernando de Melo who introduced the subject of coarse-graining and proposed the topic of emergent dynamics via quantum cellular automata during my PhD I also would like to thank  Giuseppe Di Molfetta for helpful discussion during this project, Yuval R. Sanders for careful
reading of the manuscript, and Gavin Brennen for helpful comments. This work was partially supported by ARC Discovery Projects.
 \\

\section{Appendices}

\subsection{Effective state at level $L+1$ from the state at level $L$}
\label{app:effective}

In the following, we apply the coarse-graining local map to the operator (\ref{eq:mainOp}), getting a general expression for the coarse-grained density operator in level $L+1$. 
To get this general form, note that we can distinguish three different cases: (i) $x=x'$ and $a=a'$, (ii) $a\neq a'$, for all for $x = x$, and (iii) $x \neq x$  . In case (i), the state at level $L+1$ reads: 
\begin{align*}
    &\Lambda_{CG}\left(\borb{k,a}{k,a}\right)\\
    &=\left|\cdots\left(0\right)_{k-1}\left(1\right)_{k}\left(0\right)_{k+1}\cdots\right\rangle \left\langle \cdots\left(0\right)_{k-1}\left(1\right)_{k}\left(0\right)_{k+1}\cdots\right|.
\end{align*}
Note that the three conditions for $x$ shown above imply the same for $k$.
The number of subcells for each position is reduced to one, and we have to space group subcells at different spatial positions, to build a new cell hosting two qubits. Consequently, the choice of the space grouping of the subcells, at a higher level, might be determinant and may depend on the spatial position at level $L$. In fact, applying the map at even position $x$ leads to:
\[
\left|\cdots\left(1\right)_{\frac{k}{2}}\cdots\right\rangle \left\langle \cdots\left(1\right)_{\frac{k}{2}}\cdots\right|=\left|\frac{k}{2},0\right\rangle \left\langle \frac{k}{2},0\right|.
\]
where the one particle excitation takes now this place at $a=0$, the left subcell. While if the local map is applied at odd position $x$, the single-particle excitation will be placed in the right subcell $a=1$:
\[
\left|\cdots\left(1\right)_{\frac{k-1}{2}}\cdots\right\rangle \left\langle \cdots\left(1\right)_{\frac{k-1}{2}}\cdots\right|=\left|\frac{k-1}{2},1\right\rangle \left\langle \frac{k-1}{2},1\right|.
\]
Notice once again that, without lack of generality, we space-group from the left to the right. 

Both the above cases can be recast as follows:
\begin{equation}
\Lambda_{CG}\left(\borb{k,a}{k,a}\right)= \borb{\frac{k}{2},k\;\text{mod}\; 2}{\frac{k}{2},k\;\text{mod}\; 2}
\label{eq:ra=r'a'}
\end{equation}

Case (ii), i.e., $a\neq a'$ and $x=x'$ leads directly to 
\begin{equation}
\Lambda_{CG}\left(\left|k,a\right\rangle \left\langle k,a'\right|\right)=0\label{eq:r=r'},
\end{equation}
because $\Lambda_{x}\left(\left|01\right\rangle \left\langle 10\right|\right)=0$.

Finally, for $x \neq x'$, we follow the same procedure, and we apply the local map for each spatial position $x$:
\begin{align*}
    &\Lambda_{CG}\left(\left|k,a\right\rangle \left\langle k',a'\right|\right)=\\
    &\left|\ldots\left(1\right)_{k},\ldots\left(0\right)_{x_L'},\ldots\right\rangle \left\langle \ldots,\left(0\right)_{k},\ldots\left(1\right)_{k'},\ldots\right|,
\end{align*}
and then, from equation (\ref{eq:rules}), we see that the CPTP map covering the coherence terms leads us to
\begin{align}
&\Lambda_{CG}\left(\left|k,a\right\rangle \left\langle k',a'\right|\right)\nonumber\\
&=\frac{1}{3}\left|\left\lfloor\frac{k}{2}\right\rfloor,k\;\text{mod}\; 2\right\rangle \left\langle \left\lfloor\frac{k'}{2}\right\rfloor,k'\;\text{mod}\; 2\right|.
\label{eq:rneqr'}
\end{align}
Finally, recast together equations (\ref{eq:ra=r'a'}), (\ref{eq:r=r'}) and (\ref{eq:rneqr'}), we get the final expression shown in equation (\ref{eq:cg_L+1}), which is the coarse grained density operator at level $L+1$.

\subsection{Decoherence by coarse-graining}\label{APP:decoh}

\begin{lemma}
\label{lemma}
Let $p\in [0,1]^D$, with $D\ge 2$, be a probability distribution, i.e., $\sum_{i=1}^D p_i = 1$. Then the inequality
\begin{equation}
    \sum_{i=1}^D \sqrt{p_i}\le \sqrt{D}
\end{equation}
holds, and it is saturated when $p_i=1/D$ for all $i\in [D]$.
\end{lemma}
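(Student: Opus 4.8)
The plan is to treat this as a direct application of the Cauchy--Schwarz inequality. First I would view the left-hand side as the Euclidean inner product of the vector $\left(\sqrt{p_1},\dots,\sqrt{p_D}\right)$ with the all-ones vector $\left(1,\dots,1\right)$. Cauchy--Schwarz then gives
\begin{equation}
\sum_{i=1}^D \sqrt{p_i}\cdot 1 \le \left(\sum_{i=1}^D p_i\right)^{1/2}\left(\sum_{i=1}^D 1^2\right)^{1/2} = \sqrt{1}\cdot\sqrt{D}=\sqrt{D},
\end{equation}
where the normalization $\sum_i p_i = 1$ is used in the penultimate step. This immediately yields the claimed bound with essentially no computation.

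For the saturation claim, I would invoke the equality case of Cauchy--Schwarz: equality holds precisely when the two vectors are linearly dependent, i.e.\ when $\sqrt{p_i}$ is independent of $i$. Writing $\sqrt{p_i}=c$ forces $p_i=c^2$ for all $i$, and re-imposing $\sum_i p_i = Dc^2 = 1$ gives $c^2 = 1/D$, hence $p_i=1/D$. Conversely, substituting $p_i=1/D$ back into the sum gives $D\cdot\sqrt{1/D}=\sqrt{D}$, confirming that the bound is attained. An equivalent route, if one prefers to avoid citing the equality case of Cauchy--Schwarz explicitly, is to use the strict concavity of the square root together with Jensen's inequality: $\frac{1}{D}\sum_i\sqrt{p_i}\le\sqrt{\frac{1}{D}\sum_i p_i}=\sqrt{1/D}$, with equality if and only if all the $p_i$ coincide. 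Multiplying through by $D$ recovers the same statement, including the saturating configuration.

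Honestly there is no substantial obstacle here; the lemma is an elementary normalization bound. The only point that warrants a sentence of care is the saturation direction, where one must verify both that the uniform distribution attains the bound and that no other distribution does. I would therefore be sure to state the equality condition of whichever inequality I invoke, since the lemma explicitly asks for the saturating configuration $p_i=1/D$. The hypothesis $D\ge 2$ plays no essential role beyond ensuring the statement is nontrivial.
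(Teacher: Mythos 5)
Your proof is correct, but it takes a genuinely different route from the paper. The paper proves the lemma by induction on $D$: the base case $D=2$ is handled by calculus (maximizing $f_2(p_1)=\sqrt{p_1}+\sqrt{1-p_1}$), and the inductive step factors the partial sum as $\sqrt{\sum_{j<D} p_j}\,\bigl(\sum_{i<D}\sqrt{p_i/\sum_{j<D} p_j}\bigr)$, applies the hypothesis to the renormalized distribution, and then maximizes $f_D(p_D)=\sqrt{1-p_D}\sqrt{D-1}+\sqrt{p_D}$ by another calculus step. Your one-line Cauchy--Schwarz argument (equivalently, Jensen with the concavity of $\sqrt{\cdot}$) buys several things: it avoids the induction and the repeated optimization entirely; it delivers the equality case for free and in a stronger form than the lemma asks --- you show the bound is attained \emph{only} at the uniform distribution, whereas the paper's statement and proof only exhibit saturation at $p_i=1/D$ without ruling out other maximizers; and it sidesteps the bookkeeping of renormalized distributions, where the paper's write-up in fact contains a slip (it defines $p'_i=\sqrt{p_i/\sum_j p_j}$, with a spurious square root, where $p'_i=p_i/\sum_j p_j$ is meant). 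The paper's inductive route has the minor pedagogical virtue of being self-contained modulo single-variable calculus, but your argument is shorter, more standard, and strictly more informative. Your closing remark is also accurate: the hypothesis $D\ge 2$ is inessential to your proof (the bound holds trivially, with equality, at $D=1$), whereas the paper needs $D\ge 2$ as the base case of its induction.
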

\begin{proof}
The proof goes by the way of induction. For the base case, $D=2$, we define the function $f_2: [0,1] \rightarrow \Rl$ as 
\[
f_2(p_1) = \sqrt{p_1} +\sqrt{1-p_1};
\]
where we already employed the normalization property. Setting $d f_2/d p_1=0$ is simple to obtain that the maximum is obtained for $p_1=1/2$. Therefore, $\sqrt{p_1} +\sqrt{1-p_1} \le \sqrt{2}$.

For the inductive step, we assume that $\sum_{i=0}^{D-1} \sqrt{p_i}\le \sqrt{D-1}$. For $D$ we have:
\begin{align}
    \sum_{i=0}^{D} \sqrt{p_i} & = \sum_{i=0}^{D-1} \sqrt{p_i} + \sqrt{p_{D}}\nonumber\\
    & = \sqrt{\sum_{j=0}^{D-1} p_j} \left(\sum_{i=0}^{D-1} \sqrt{\frac{p_i}{\sum_{j=0}^{D-1} p_j}}\right) + \sqrt{p_{D}}.
\end{align}
Noting that  defining $p'_i = \sqrt{p_i/\sum_{j=0}^{D-1} p_j}$, then $p'$ is a probability distribution in $[0,1]^{D-1}$, for which the induction hypothesis can be applied, and that $\sum_{j=0}^{D-1} p_j = 1- p_{D}$ we have:
\begin{equation}
    \sum_{i=0}^{D} \sqrt{p_i}  \le \sqrt{1- p_{D}}\sqrt{D-1} + \sqrt{p_{D}}.
\end{equation}
Finally, the function $f_D(p_D)=\sqrt{1- p_{D}}\sqrt{D-1} + \sqrt{p_{D}}$ has its maximum value, $ \sqrt{D}$, at $p_D=1/D$, what finishes the proof.
\end{proof}
Before we proof that the off-diagonal elements of the density matrix go exponentially faster to zero, the simplest case of the upper bound value from the level 0 to the level 1 is explicitly computed. 

From equation (\ref{eq:cg0toLcoe}), for $b=b'=0$, we see that when we move from the level $0$ to $1$ we have,
\begin{align}
\left(\rho_{1}\right)^{x,0}_{x,0}=&\frac{1}{3}\left[\left(\rho_{0}\right)^{2x,0}_{2x',0}+\left(\rho_{0}\right)^{2x,1}_{2x',1}\right.\\
&\quad+\left.\left(\rho_{0}\right)^{2x,1}_{2x',0}+\left(\rho_{0}\right)^{2x,0}_{2x',1}\right]\nonumber.
\end{align}
Now, from the triangle inequality, i.e., $\left|a+b\right|\leq \left|a\right|+\left|b\right|$, we can write
\begin{align}
\label{eq:ine2}
\left|\left(\rho_{1}\right)^{x,0}_{x',0}\right|\leq&\frac{1}{3}\left[\left|\left(\rho_{0}\right)^{2x,0}_{2x',0}\right|+\left|\left(\rho_{0}\right)^{2x,1}_{2x',1}\right|\right.\\
&\quad+\left.\left|\left(\rho_{0}\right)^{2x,1}_{2x',0}\right|+\left|\left(\rho_{0}\right)^{2x,0}_{2x',1}\right|\right]\nonumber.
\end{align}

Using the following inequality for the density matrix
\begin{equation}
\label{eq:ineq_pop_coh}
\left|\rho^{r,a}_{r',a'}\right|^2\leq \rho_{r,a}^{r,a}\rho_{r',a'}^{r',a'},
\end{equation}

equation (\ref{eq:ine2}) can be written as
{\footnotesize\begin{align*}
\left|\left(\rho_{1}\right)^{x,0}_{x',0}\right|\leq&\frac{1}{3}\left[\sqrt{\left(\rho_{0}\right)^{2x,0}_{2x,0}\left(\rho_{0}\right)^{2x',0}_{2x',0}}+\sqrt{\left(\rho_{0}\right)^{2x,1}_{2x,1}\left(\rho_{0}\right)^{2x',1}_{2x',1}}\right.\\
&\quad+\left.\sqrt{\left(\rho_{0}\right)^{2x,1}_{2x,1}\left(\rho_{0}\right)^{2x',0}_{2x',0}}+\sqrt{\left(\rho_{0}\right)^{2x,0}_{2x,0}\left(\rho_{0}\right)^{2x',1}_{2x',1}}\right]\nonumber.
\end{align*}}
After a simple algebraic manipulation in the equation above, we have
\begin{align*}
\left|\left(\rho_{1}\right)^{x,0}_{x',0}\right|\leq&\frac{1}{3}\sqrt{\left(\rho_{0}\right)^{2x,0}_{2x,0}}\left(\sqrt{\left(\rho_{0}\right)^{2x',0}_{2x',0}}+\sqrt{\left(\rho_{0}\right)^{2x',1}_{2x',1}}\right)\\
&\quad+\frac{1}{3}\sqrt{\left(\rho_{0}\right)^{2x,1}_{2x,1}}\left(\sqrt{\left(\rho_{0}\right)^{2x',0}_{2x',0}}+\sqrt{\left(\rho_{0}\right)^{2x',1}_{2x',1}}\right).
\end{align*}
Noticing that
{\footnotesize
\begin{equation}
\label{eq:ineqsqrt}
\sqrt{\left(\rho_{0}\right)^{2x,a}_{2x,a}}\left(\sqrt{\left(\rho_{0}\right)^{2x',0}_{2x',0}}+\sqrt{\left(\rho_{0}\right)^{2x',1}_{2x',1}}\right) \leq \sqrt{\left(\rho_{0}\right)^{2x',0}_{2x',0}}+\sqrt{\left(\rho_{0}\right)^{2x',1}_{2x',1}},
\end{equation}} 
since $\left(\rho_{0}\right)^{2x,a}_{2x,a}\leq1$ for $a=0,1$ we have that  
\begin{equation}
\label{eq:com_fac}
\left|\left(\rho_{1}\right)^{x,0}_{x',0}\right|\leq\frac{2}{3}\left(\sqrt{\left(\rho_{0}\right)^{2x',0}_{2x',0}}+\sqrt{\left(\rho_{0}\right)^{2x',1}_{2x',1}}\right).
\end{equation}
Now, from lemma \ref{lemma}
\begin{align*}
\sqrt{\left(\rho_{0}\right)^{2x',0}_{2x',0}}+\sqrt{\left(\rho_{0}\right)^{2x',1}_{2x',1}}\leq\sqrt{2},
\end{align*}
therefore, 
\begin{equation}
\label{eq:ineq}
\left|\left(\rho_{1}\right)^{x,0}_{x',0}\right|\leq\frac{2\sqrt{2}}{3}<1.
\end{equation}

\begin{theorem}
Let $\left(\rho_{L}\right)\in \mathbb{C}^{n/2^L}\times\mathbb{C}^{n/2^L}$ be the density matrix in the coarse level $L$, with $n\gg2^L$, where their off-diagonal elements in terms of the level 0 are given by equation (\ref{eq:cg0toLcoe}). Then, the absolute value of the coherence terms of $\left(\rho_{L}\right)$ is upper bounded by
\begin{equation}
    \left(\frac{2\sqrt{2}}{3}\right)^L,
\end{equation}
for $x\neq x'$ and $b,b'=0,1$.
\end{theorem}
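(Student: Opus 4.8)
The plan is to prove the bound directly from the closed-form expression (\ref{eq:cg0toLcoe}) for the off-diagonal elements at level $L$, rather than by induction on $L$. The reason is that (\ref{eq:cg0toLcoe}) already exposes the decisive prefactor $1/3^L$, so the exponential decay will emerge from a single chain of inequalities applied to the finite double sum over $s,s'\in\{0,\dots,2^{L-1}-1\}$ and $a,a'\in\{0,1\}$. The computation is simply the $L$-fold generalization of the warm-up that produced the bound (\ref{eq:ineq}) for the step from level $0$ to level $1$. An induction built on the one-step recursion (\ref{eq:cg_L+1}) would instead collapse, because invoking positivity at each step replaces the small off-diagonal entries by diagonal ones and thereby discards the accumulated smallness, yielding only a constant bound.

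First I would apply the triangle inequality to (\ref{eq:cg0toLcoe}), pulling the absolute value inside, so that $|(\rho_L)^{x,b}_{x',b'}|\le \tfrac{1}{3^L}\sum_{s,s',a,a'}|(\rho_0)^{\,2^Lx+2^{L-1}b\Delta x+s\Delta x,\,a}_{\,2^Lx'+2^{L-1}b'\Delta x+s'\Delta x,\,a'}|$. Next I would bound each level-$0$ coherence by the square root of the product of its two associated populations using (\ref{eq:ineq_pop_coh}). The key algebraic observation is that the resulting summand is a product of a factor depending only on the ``row'' indices $(s,a)$ and one depending only on the ``column'' indices $(s',a')$, so the double sum factorizes into a product of two single sums, each ranging over $2^{L-1}$ values of $s$ (resp.\ $s'$) and two values of $a$ (resp.\ $a'$), i.e.\ over $2^L$ terms.

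It then remains to bound the two factors. For the column factor I would note that its $2^L$ summands are square roots of distinct diagonal entries of $\rho_0$ — distinct because the positions $2^Lx'+2^{L-1}b'\Delta x+s'\Delta x$ differ for differing $s'$ and the two subcells differ — so these populations form a (sub-normalized) probability distribution and Lemma \ref{lemma} gives a bound of $\sqrt{2^L}=2^{L/2}$. For the row factor I would use the cruder estimate that each diagonal entry is at most $1$, so its $2^L$ square-root summands sum to at most $2^L$. Collecting the three pieces yields $\tfrac{1}{3^L}\cdot 2^L\cdot 2^{L/2}=2^{3L/2}/3^L=(2\sqrt2/3)^L$, the claimed bound; the argument is uniform in $b,b'\in\{0,1\}$ and uses $x\neq x'$ only to guarantee that the elements involved are genuinely off-diagonal.

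I expect the main obstacle to be the bookkeeping in the factorization step: one must check carefully that, after the positivity bound (\ref{eq:ineq_pop_coh}) is applied, the summand really separates into independent row and column contributions, and that each factor contains exactly $2^L$ terms indexing distinct configurations, so that Lemma \ref{lemma} applies with $D=2^L$. A minor subtlety is that Lemma \ref{lemma} is stated for a distribution normalized to $1$, whereas the relevant populations here sum only to at most $1$; this is harmless, since rescaling by the total mass only decreases the sum of square roots, so $\sqrt{2^L}$ still holds. I would also remark that applying Lemma \ref{lemma} to the row factor as well, in place of the crude bound $2^L$, would sharpen the result to $(2/3)^L$; the stated bound follows from the asymmetric estimate that directly mirrors the level-$0$-to-$1$ computation.
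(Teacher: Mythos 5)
Your proof is correct and follows essentially the same route as the paper's: triangle inequality on (\ref{eq:cg0toLcoe}), the positivity bound (\ref{eq:ineq_pop_coh}) to factorize the double sum into row and column factors, the crude estimate bounding the row factor by $2^L$, and Lemma \ref{lemma} with $D=2^L$ on the column factor, yielding $(2\sqrt{2}/3)^L$. Your two side remarks — that Lemma \ref{lemma} applies to sub-normalized distributions after harmless rescaling, and that using it on both factors would sharpen the bound to $(2/3)^L$ — are both valid and go slightly beyond what the paper states, but they do not change the argument.
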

\begin{proof}
From equation (\ref{eq:cg0toLcoe}), the following inequality can be written as
\begin{align}
   &\left|\left(\rho_{L}\right)_{x',b'}^{x,b}\right| \leq\frac{1}{3^L}\sum_{s,s'=0}^{2^{L-1}-1} \left( \left|\left(\rho_{0}\right)_{2^L x'+2^{L-1}b'\Delta x+s'\Delta x,0}^{2^L x+2^{L-1}b\Delta x+s\Delta x,0}\right|\right.\nonumber\\ 
   &\quad+ \left|\left(\rho_{0}\right)_{2^L x'+2^{L-1}b'\Delta x+s'\Delta x,0}^{2^L x+2^{L-1}b\Delta x+s\Delta x,1}\right| \left|\left(\rho_{0}\right)_{2^L x'+2^{L-1}b'\Delta x+s'\Delta x,1}^{2^L x+2^{L-1}b\Delta x+s\Delta x,0}\right|\nonumber\\
   &\quad+ \left.\left|\left(\rho_{0}\right)_{2^L x'+2^{L-1}b'\Delta x+s'\Delta x,1}^{2^L x+2^{L-1}b\Delta x+s\Delta x,1}\right|\right).
\end{align}
Now, using the inequality given by equation (\ref{eq:ineq_pop_coh})
\begin{align*}
    &\left|\left(\rho_{L}\right)_{x',b'}^{x,b}\right|\leq\frac{1}{3^L}\sum_{s=0}^{2^{L-1}-1}\nonumber\\
    &\quad\times\left[\sqrt{\left(\rho_{0}\right)_{2^L x+2^{L-1}b\Delta x+s\Delta x,0}^{2^L x+2^{L-1}b\Delta x+s\Delta x,0}}\sum_{s'=0}^{2^{L-1}-1}\left(\sqrt{\left(\rho_{0}\right)_{2^L x'+2^{L-1}b'\Delta x'+s'\Delta x,0}^{2^L x'+2^{L-1}b'\Delta x'+s\Delta x',0}}\right.\right.\nonumber\\
    &\quad+\left.\sqrt{\left(\rho_{0}\right)_{2^L x'+2^{L-1}b'\Delta x'+s'\Delta x,1}^{2^L x'+2^{L-1}b'\Delta x'+s\Delta x',1}}\right)\nonumber\\
    &\quad+\sqrt{\left(\rho_{0}\right)_{2^L x+2^{L-1}b\Delta x+s\Delta x,1}^{2^L x+2^{L-1}b\Delta x+s\Delta x,1}}\sum_{s'=0}^{2^{L-1}-1}\left(\sqrt{\left(\rho_{0}\right)_{2^L x'+2^{L-1}b'\Delta x'+s'\Delta x,0}^{2^L x'+2^{L-1}b'\Delta x'+s\Delta x',0}}\right.\\
    &\quad+\left.\left.\sqrt{\left(\rho_{0}\right)_{2^L x'+2^{L-1}b'\Delta x'+s'\Delta x,1}^{2^L x'+2^{L-1}b'\Delta x'+s\Delta x',1}}\right)\right].
\end{align*}
Now we can apply the inequality used in equation (\ref{eq:ineqsqrt}) to get
\begin{align*}
    \left|\left(\rho_{L}\right)_{x',b'}^{x,b}\right|&\leq\frac{2^{L-1}}{3^L}\left[\sum_{s'=0}^{2^{L-1}-1}\left(\sqrt{\left(\rho_{0}\right)_{2^L x'+2^{L-1}b'\Delta x'+s'\Delta x,0}^{2^L x'+2^{L-1}b'\Delta x'+s\Delta x',0}}\right.\right.\\
    &\quad+\left.\sqrt{\left(\rho_{0}\right)_{2^L x'+2^{L-1}b'\Delta x'+s'\Delta x,1}^{2^L x'+2^{L-1}b'\Delta x'+s\Delta x',1}}\right)\\
    &\quad+ \sum_{s'=0}^{2^{L-1}-1}\left(\sqrt{\left(\rho_{0}\right)_{2^L x'+2^{L-1}b'\Delta x'+s'\Delta x,0}^{2^L x'+2^{L-1}b'\Delta x'+s\Delta x',0}}\right.\nonumber\\
    &\quad+ \left.\left.\sqrt{\left(\rho_{0}\right)_{2^L x'+2^{L-1}b'\Delta x'+s'\Delta x,1}^{2^L x'+2^{L-1}b'\Delta x'+s\Delta x',1}}\right)\right],
    \end{align*}
    or
\begin{align}
    \left|\left(\rho_{L}\right)_{x',b'}^{x,b}\right|&\leq\frac{2^{L}}{3^L}\sum_{s'=0}^{2^{L-1}-1}\left(\sqrt{\left(\rho_{0}\right)_{2^L x'+2^{L-1}b'\Delta x'+s'\Delta x,0}^{2^L x'+2^{L-1}b'\Delta x'+s\Delta x',0}}\right.\nonumber\\
    &\quad+\left.\sqrt{\left(\rho_{0}\right)_{2^L x'+2^{L-1}b'\Delta x'+s'\Delta x,1}^{2^L x'+2^{L-1}b'\Delta x'+s\Delta x',1}}\right).
\end{align}
The proof can be conclude by using the lemma \ref{lemma} in the inequality above for $D=2^L$
\end{proof}


\begin{thebibliography}{10}

\bibitem{alicki2009}
R.~Alicki, M.~Fannes, and M.~Pogorzelska.
\newblock Quantum generalized subsystems.
\newblock {\em Phys. Rev. A}, 79:052111, May 2009.

\bibitem{NoiseQCA}
Michele Avalle and Alessio Serafini.
\newblock Noisy quantum cellular automata for quantum versus classical
  excitation transfer.
\newblock {\em Phys. Rev. Lett.}, 112:170403, May 2014.

\bibitem{twoWalk2}
Scott~D. Berry and Jingbo~B. Wang.
\newblock Two-particle quantum walks: Entanglement and graph isomorphism
  testing.
\newblock {\em Phys. Rev. A}, 83:042317, Apr 2011.

\bibitem{wavelet}
Gavin~K. Brennen, Peter Rohde, Barry~C. Sanders, and Sukhwinder Singh.
\newblock Multiscale quantum simulation of quantum field theory using wavelets.
\newblock {\em Phys. Rev. A}, 92:032315, Sep 2015.

\bibitem{chen2019quantifying}
Hong-Bin Chen, Ping-Yuan Lo, Clemens Gneiting, Joonwoo Bae, Yueh-Nan Chen, and
  Franco Nori.
\newblock Quantifying the nonclassicality of pure dephasing.
\newblock {\em Nature communications}, 10(1):1--9, 2019.

\bibitem{twoWalk3}
Pedro C.~S. Costa, Fernando de~Melo, and Renato Portugal.
\newblock Multiparticle quantum walk with a gaslike interaction.
\newblock {\em Phys. Rev. A}, 100:042320, Oct 2019.

\bibitem{QCAviaQws}
Pedro C.~S. Costa, Renato Portugal, and Fernando de~Melo.
\newblock Quantum walks via quantum cellular automata.
\newblock {\em Quantum Information Processing}, 17(9):226, Jul 2018.

\bibitem{costaPCA}
Pedro~C.S. Costa and Fernando de~Melo.
\newblock Coarse graining of partitioned cellular automata.
\newblock {\em Journal of Cellular Automata}, 15:305–331.

\bibitem{Dirac}
Paul Adrien~Maurice Dirac.
\newblock The quantum theory of the electron.
\newblock {\em Proceedings of the Royal Society of London. Series A, Containing
  Papers of a Mathematical and Physical Character}, 117(778):610--624, 1928.

\bibitem{cris2019}
Cristhiano Duarte.
\newblock Compatibility between agents as a tool for coarse-grained
  descriptions of quantum systems.
\newblock {\em J. Phys. A}, 53(39):395301, aug 2020.

\bibitem{duarte2020investigating}
Cristhiano Duarte, Barbara Amaral, Marcelo~Terra Cunha, and Matthew Leifer.
\newblock Investigating coarse-grainings and emergent quantum dynamics with
  four mathematical perspectives.
\newblock {\em arXiv preprint arXiv:2011.10349}, 2020.

\bibitem{CG_QM}
Cristhiano Duarte, Gabriel~Dias Carvalho, Nadja~K. Bernardes, and Fernando
  de~Melo.
\newblock Emerging dynamics arising from coarse-grained quantum systems.
\newblock {\em Phys. Rev. A}, 96:032113, Sep 2017.

\bibitem{ehrenfest1990conceptual}
Paul Ehrenfest and Tatiana Ehrenfest.
\newblock {\em The conceptual foundations of the statistical approach in
  mechanics}.
\newblock Courier Corporation, 1990.

\bibitem{CG_CA}
Navot Israeli and Nigel Goldenfeld.
\newblock Coarse-graining of cellular automata, emergence, and the
  predictability of complex systems.
\newblock {\em Physical Review E}, 2006.

\bibitem{Jeong2014}
Hyunseok Jeong, Youngrong Lim, and M.~S. Kim.
\newblock Coarsening measurement references and the quantum-to-classical
  transition.
\newblock {\em Phys. Rev. Lett.}, 112:010402, Jan 2014.

\bibitem{Oleg}
Oleg Kabernik.
\newblock Quantum coarse graining, symmetries, and reducibility of dynamics.
\newblock {\em Phys. Rev. A}, 97:052130, May 2018.

\bibitem{PhysicsPhysiqueFizika.2.263}
Leo~P. Kadanoff.
\newblock Scaling laws for ising models near.
\newblock {\em Physics Physique Fizika}, 2:263--272, Jun 1966.

\bibitem{caslavLG}
Johannes Kofler and {\ifmmode \check{C}\else \v{C}\fi{}}aslav Brukner.
\newblock Conditions for quantum violation of macroscopic realism.
\newblock {\em Phys. Rev. Lett.}, 101:090403, Aug 2008.

\bibitem{mermin1980}
N.~D. Mermin.
\newblock Quantum mechanics vs local realism near the classical limit: A bell
  inequality for spin $s$.
\newblock {\em Phys. Rev. D}, 22:356--361, Jul 1980.

\bibitem{PhysRevA.97.042131}
Leonard Mlodinow and Todd~A. Brun.
\newblock Discrete spacetime, quantum walks, and relativistic wave equations.
\newblock {\em Phys. Rev. A}, 97:042131, Apr 2018.

\bibitem{poulin2005}
David Poulin.
\newblock Macroscopic observables.
\newblock {\em Phys. Rev. A}, 71:022102, Feb 2005.

\bibitem{twoWalk1}
Xizhou Qin, Yongguan Ke, Xiwen Guan, Zhibing Li, Natan Andrei, and Chaohong
  Lee.
\newblock Quantum walks of two interacting particles in one dimension.
\newblock {\em arXiv preprint arXiv:1402.3349}, 2014.

\bibitem{schlosshauer2014quantum}
Maximilian Schlosshauer.
\newblock The quantum-to-classical transition and decoherence.
\newblock {\em arXiv preprint arXiv:1404.2635}, 2014.

\bibitem{silva2020macro}
Pedro Silva~Correia, Paola Concha~Obando, Ra{\'u}l~O Vallejos, and Fernando
  de~Melo.
\newblock Macro-to-micro quantum mapping and the emergence of nonlinearity.
\newblock {\em arXiv e-prints}, pages arXiv--2007, 2020.

\bibitem{pedrinho}
Pedro Silva~Correia and Fernando de~Melo.
\newblock Spin-entanglement wave in a coarse-grained optical lattice.
\newblock {\em Phys. Rev. A}, 100:022334, Aug 2019.

\bibitem{Margolus}
Tommaso Toffoli and Norman Margolous.
\newblock {\em Cellular Automata Machines}.
\newblock MIT Press Series in Scientific Computation, 1985.

\bibitem{isadora2020}
Isadora Veeren and Fernando de~Melo.
\newblock Entropic uncertainty relations and the quantum-to-classical
  transition.
\newblock {\em Phys. Rev. A}, 102:022205, Aug 2020.

\bibitem{Wang2013}
Tian Wang, Roohollah Ghobadi, Sadegh Raeisi, and Christoph Simon.
\newblock Precision requirements for observing macroscopic quantum effects.
\newblock {\em Phys. Rev. A}, 88:062114, Dec 2013.

\bibitem{wolf2012quantum}
Michael~M Wolf.
\newblock Quantum channels \& operations: Guided tour.
\newblock {\em Lecture notes available at http://www-m5. ma. tum.
  de/foswiki/pub M}, 5, 2012.

\bibitem{Wol00}
Dieter~A. Wolf-Gladrow.
\newblock {\em {Lattice-Gas Cellular Automata and Lattice Boltzmann Models: An
  Introduction}}.
\newblock Springer, Berlin, 2000.

\end{thebibliography}
\end{document}